\documentclass[pra,twocolumn]{revtex4}

\usepackage{lmodern}
\usepackage[utf8]{inputenc}
\usepackage[T1]{fontenc}

\usepackage{graphicx}
\usepackage{subfigure}
\graphicspath{{./Figures/}}

\usepackage{amsmath}
\usepackage{amssymb,amsthm}
\usepackage{bm}
\usepackage{bbm}
\usepackage{xcolor}

\newtheorem{theorem}{Theorem}
\newtheorem{lemma}{Lemma}

\renewcommand{\bm}{\mathbf}
\newcommand{\B}{\mathcal B_L}

\newcounter{mycount}

\newcommand{\id}{\mathbbm{1}}

%Format eqnarray nicely. CONSIDER SWITCH TO ALIGN
%\newcommand{\be}[1]{ \begin{eqnarray} \mbox{$\labe d_l{#1}$} }    
%\newcommand{\ee}{\end{eqnarray}}

\newcommand{\N}{\mathcal N_R}

\newcommand{\be}[1]{ \begin{equation} \begin{aligned}{#1} }    
\newcommand{\ee}{\end{aligned} \end{equation}}

\usepackage{braket}
\usepackage{multirow,bigdelim}

%%%%%%%%%%%%%%%%
%%%%%%%%%%%%%%%%

\begin{document}

\title{Composite fermion basis for M-component Bose gases}
\author{V. Skogvoll}
\author{O. Liab\o tr\o}

\affiliation{Department of Physics, University of Oslo, P.O. Box 1048 Blindern,
0316 Oslo, Norway}

\date{\today}

\begin{abstract}

The composite fermion (CF) formalism produces wave functions that are not always linearly independent. 
This is especially so in the low angular momentum regime in the lowest Landau level, where a subclass of CF states, known as simple states, gives a good description of the low energy spectrum.
For the two-component Bose gas, explicit bases avoiding the large number of redundant states have been found. 
We generalize one of these bases to the $M$-component Bose gas and prove its validity. 
We also show that the numbers of linearly independent simple states for different values of angular momentum are given by coefficients of $q$-multinomials.
\end{abstract}
\pacs{ }

\maketitle

%%%%%%%%%%%%%%%%
%%%%%%%%%%%%%%%%

\section{Introduction}
\label{sec:intro}

Rapidly rotating atomic gases and the associated quantum phenomena have been studied quite extensively, for review see e.g. \cite{viefers-review, cooper-review, fetter-review}.
Over the years, different groups have been able to engineer multi-component Bose-Einstein condensates.
Examples include using hyperfine states of $^{87}\textrm{Rb}$ \cite{hyperfinestates1,hyperfinestates2} to create two-component mixtures, 
the use of optical traps to create spin-1 Bose-Einstein condensates with $^{87}\textrm{Rb}$ \cite{RbSpinorCondensate} and $^{23}\textrm{Na}$ \cite{NaSpinorCondensate1,NaSpinorCondensate2,NaSpinorCondensate3} with relatively small spin interaction contributions, 
and 
the use of Feshbach resonances to make two-component Bose condensates in $^{87}\textrm{Rb}- {}^{39}\textrm{K} $ \cite{Feshbach1}, $^{87}\textrm{Rb}- {}^{41}\textrm{K}$ \cite{Feshbach2}, and $^{87}\textrm{Rb} - {}^{85}\textrm{Rb}$ \cite{Feshbach3}.
The focus of this paper is multi-species Bose gases in the lowest Landau level (LLL) at low angular momentum. 
This regime was realized for single species systems in a recent experiment \cite{topoOrderedMatter}.
Hope is that such experiments could also be run for multi-component Bose gases and a good theoretical understanding of such systems is therefore of great interest. 

The history of constructing trial wave functions, 
from the Laughlin wave function \cite{laughlin83} to the conceptualization of composite fermion (CF) formalism \cite{jain89,jainbook} and trial wave functions addressing non-Abelian quantum Hall states \cite{moore91, read99, ardonne-schoutens99}, has shown success and taught us much about the quantum structures of such systems. 
The adaptation and application of the CF formalism has also shown promise when applied to weakly interacting rotating multi-component Bose gases \cite{marius14,vidar17}.
 In particular the space of trial wave functions spanned by 'simple states' (see section \ref{sec:SimpleStates}) has been shown to give significant overlap with the low energy part of the spectrum for both homogeneous and inhomogeneous interaction. 
 
 Simple states result from a projection to the LLL, which generally leads to linear dependencies between the resulting wave functions.
The ratio of the number of \textit{a priori} non-zero simple states to linearly independent ones can quickly get very large, especially in the low angular momentum regime, e.g. $107/8$ for a mixture of $3+2+1$ bosons at $L=4$ \cite{vidar17}.
Recent developments \cite{meyer-lia16} have been able to shed light on these linear dependencies for two-component gases, even giving explicit bases \cite{lia-meyer17}.

In this article we present an explicit basis for the space of simple states for the $M$-component Bose gas and the proof of its validity. 
This basis generalizes the two-component basis of Ref. \cite{lia-meyer17}.
Additionally we show that the numbers of basis states are given by coefficients of $q$-multinomials.

%%%%%%%%%%%%%%%%
%%%%%%%%%%%%%%%%

\section{M-Component Rotating Bose Gases}
\label{sec:theory}

In this paper, $\alpha,\beta$ will label the different species of particles while $i,j$ will label particles of the same species.
There are $M$  boson species in the system and $N_\alpha$ particles of species $\alpha$.
There are $A = \sum_{\alpha=1}^M N_\alpha$ particles in total.
Whenever we iterate over all particles, we will use indices $\mu,\nu$.
The particles of the first species have indices $\mu = 1,..,N_1$, the second $\mu = N_1+1,...,N_1+N_2$ and so on.
In general we can relate the species $\alpha$ and the particle number $i$ within that species to its index $\mu$ by 

\be{}
(\alpha,i) \leftrightarrow \mu = \left (\sum_{\beta=1}^{\alpha-1}N_\beta \right )  + i.
\ee

The Hamiltonian of an $M$-component Bose gas in a harmonic oscillator trap of strength $\omega$ under rotation with angular velocity $\Omega$ is given by  

\begin{multline}
\hat H =\sum_{\mu=1}^A \left ( \frac{\vec p_\mu^2}{2m} + \frac{1}{2} m \omega^2 \vec r_\mu^2 - \Omega \hat l_\mu \right )
\\
+
2\pi g 
\sum_{\mu=1}^A\sum_{\nu=\mu+1}^A
\delta(\vec r_\mu - \vec r_\nu).
\end{multline}
In the weak interaction limit, this reduces to the Lowest Landau level problem \cite{viefers-review} in the effective magnetic field $2 m \omega$ and the Hamiltonian is given by 

\be{}
\hat H =
\hat H_0 + \hat W
= \sum_{\mu = 1}^A
(\omega - \Omega) \hat l_\mu
+
\sum_{\mu < \nu=1}^A 2\pi g \delta(\vec r_\mu- \vec r_\nu).
\label{eq:Hamiltonian}
\ee
In the ideal limit $(\omega-\Omega) \rightarrow 0$, the Landau levels flatten and the eigenstates are solely determined by the interaction $\hat W$.

Single-particle eigenstates in the lowest Landau level with angular momentum $l$ are given by 

\be{}
\psi_{0,l}(z_\mu) = N_l z_\mu^l \exp(-z_\mu \bar z_\mu/4),
\ee
where $z_\mu = x_\mu + i y_\mu$ is the complex position variable of particle $\mu$ in units of the magnetic length $\sqrt{\hbar / (2m\omega)}$.
The simultaneous many-body eigenstates of $\hat H_0$ and the total angular momentum $\hat L = \sum_\mu \hat l_\mu$, with eigenvalues $L \hbar (\omega-\Omega)$ and $L$ respectively, are 
products of a ubiquitous Gaussian factor, $\exp\left (-\sum_{\mu=1}^A z_\mu \bar z_\mu/4 \right )$, and a homogeneous polynomial $P$ of degree $L$, symmetric in variables of the same species.

\section{Simple States}
\label{sec:SimpleStates}

A CF trial wave function for the bosonic $M$-component system is easily generalizable from the two-component case \cite{jainbook} and is on the form 

\be{}
\Psi_{CF} = \mathcal P_{LLL}
\left [
 \left (\prod_{\alpha=1}^M \Phi_\alpha\right ) \mathcal J^q
 \right ]
\exp \left ( -\sum_{\mu=1}^A \frac{z_\mu \bar z_\mu}{4} \right )
 ,
\ee
where $\mathcal P_{LLL}$ denotes the projection to the LLL and $\Phi_\alpha$ is a Slater determinant for the particles of species $\alpha$ and consists of orbitals from degenerate Landau-like levels  called $\Lambda$-levels \cite{lia-meyer17}.
$\mathcal J$ is the \textit{Jastrow factor} given by

\be{}
\mathcal J = \prod_{\mu<\nu} (z_\mu - z_\nu)
=
\sum_{\tau \in S_A} (-1)^{|\tau|}
\prod_{\mu = 1}^A
z_{\tau(\mu)}^{\mu - 1},
\ee
and $q$ is an \textit{odd} number to ensure the correct overall symmetry of the CF wave function. 
In this paper, we limit our attention to $q=1$. 
\textit{Simple states} are CF trial wave functions where only the lowest angular momentum state of each $\Lambda$-level is available to each species. 
This minimizes the CF cyclotron energy for a given $L$ \cite{meyer-lia16}.
The Slater determinants then only consist of powers of $\bar z$, translating to powers of $\partial_z$ after projection to the LLL by the projection of Girvin and Jach \cite{girvin-jach84} (called method I in Ref. \cite{jainbook}).
The polynomial part of a simple state can consequently be represented by an array $\bm n \in \mathbb N^A$ where $n_\mu$ are occupation numbers which correspond to occupied $\Lambda$-levels for the different particles.
We name these polynomials ''simple polynomials'' and they take the form

\be{}
P(\bm n)
=
\sum_{
\sigma \in S_{\oplus N_\alpha}}
(-1)^{|\sigma|}
\prod_{\mu=1}^A \partial_{\mu}^{n_{\sigma(\mu)}}
\mathcal J
\label{eq:SimplePolynomial}
\ee
where 

\be{}
S_{\oplus N_\alpha}
=
\bigoplus_{\alpha=1}^M
S_{N_\alpha},
\ee
and $S_{N_\alpha}$ is the symmetric group of $N_\alpha$ elements.
The degree of $P(\bm n)$ and the angular momentum of the corresponding simple state is 

\be{}
 L =\frac{A(A-1)}{2}- \sum_{\mu=1}^A n_\mu    .
\ee  

By the anti-symmetries of Eq. (\ref{eq:SimplePolynomial}), a simple polynomial $P(\bm n)$ is anti-symmetric under the interchange of elements $n_\mu \leftrightarrow n_\nu$ when $\mu$ and $\nu $ belong to the same species.
This allows us to represent a non-zero simple polynomial pictorially (up to a sign) in a grid $\{1,2,...,M\}\times\{0,1,2,...,A-1\}$ where a $\circ$ at position $(\alpha,n)$ corresponds to some $n_{\alpha,i}=n$, since repeated values of $\bm n$ within a species is $0$ due to anti-symmetry. 
Figure \ref{fig:SSexample} shows an example of such a pictorial representation.

\begin{figure}
\centering 
	\includegraphics[scale=1]{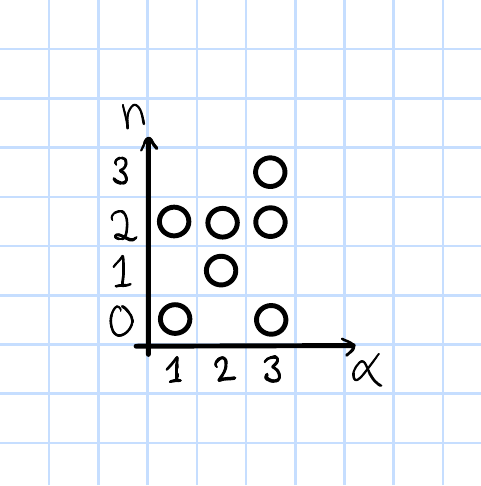}
	\caption{Pictorial representation of $P(\bm n)$ for $(N_1,N_2,N_3)=(2,2,3)$ given by $\bm n = (0,2;1,2;0,2,3)$.
	The semicolons in $\bm n$ indicate which occupation numbers belong to which species.}
	\label{fig:SSexample}
\end{figure}

As previously discovered \cite{marius14} for the two-component Bose gas and recently confirmed \cite{vidar17} for the $M$-component Bose gas, the space of simple states overlaps significantly with the low-energy part of the delta-potential interaction spectrum.
Figure \ref{fig:SimpleStateDiagonalization} shows an example of this, and perhaps most striking is the large overlap of the ground state with the space of simple states for the values of $L$ where the dimension of the latter is but a small fraction of the dimension of the relevant sector of the Hilbert space.
Seeing that the space of simple states has this characteristic, an explicit basis for it is of great interest.

\begin{figure*}
\hspace*{-1.1cm}
\subfigure[]{\includegraphics[scale=0.9]{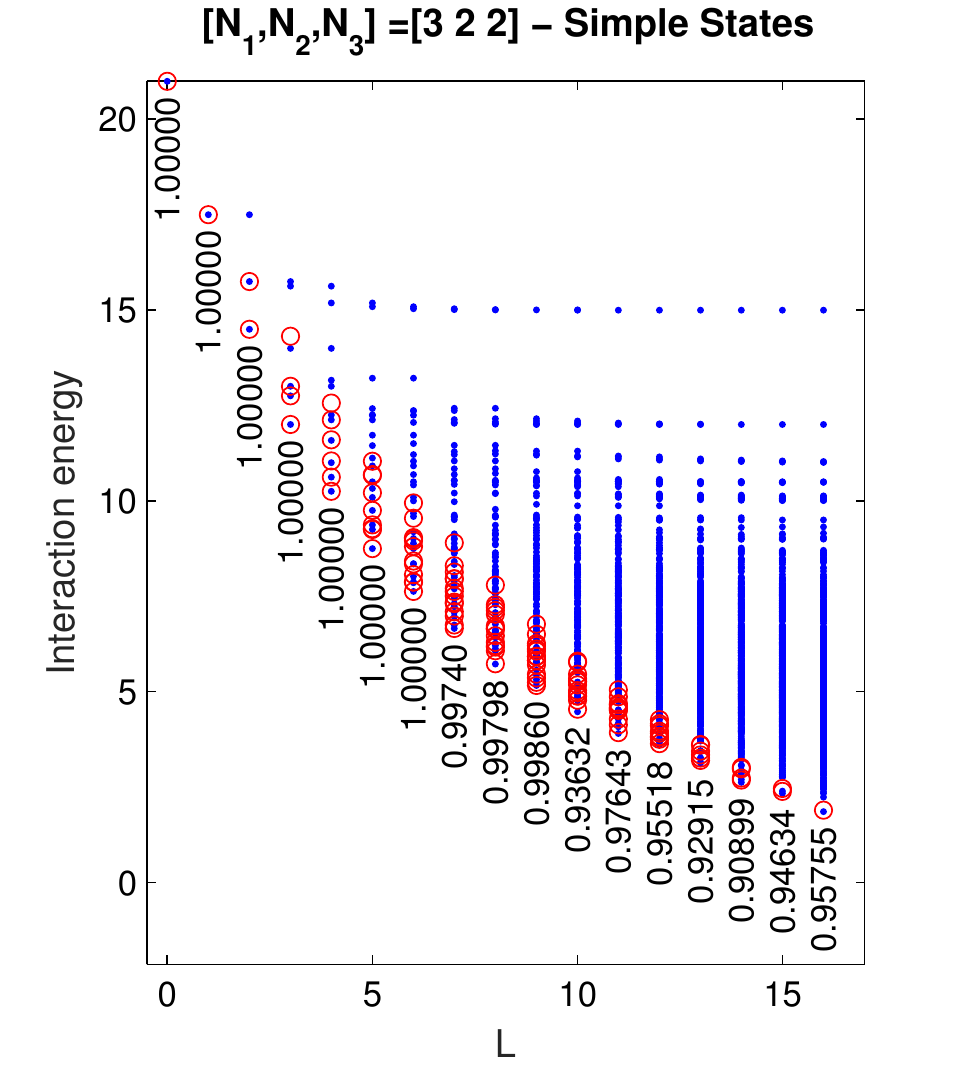}}
\subfigure[]{\includegraphics[scale=0.9]{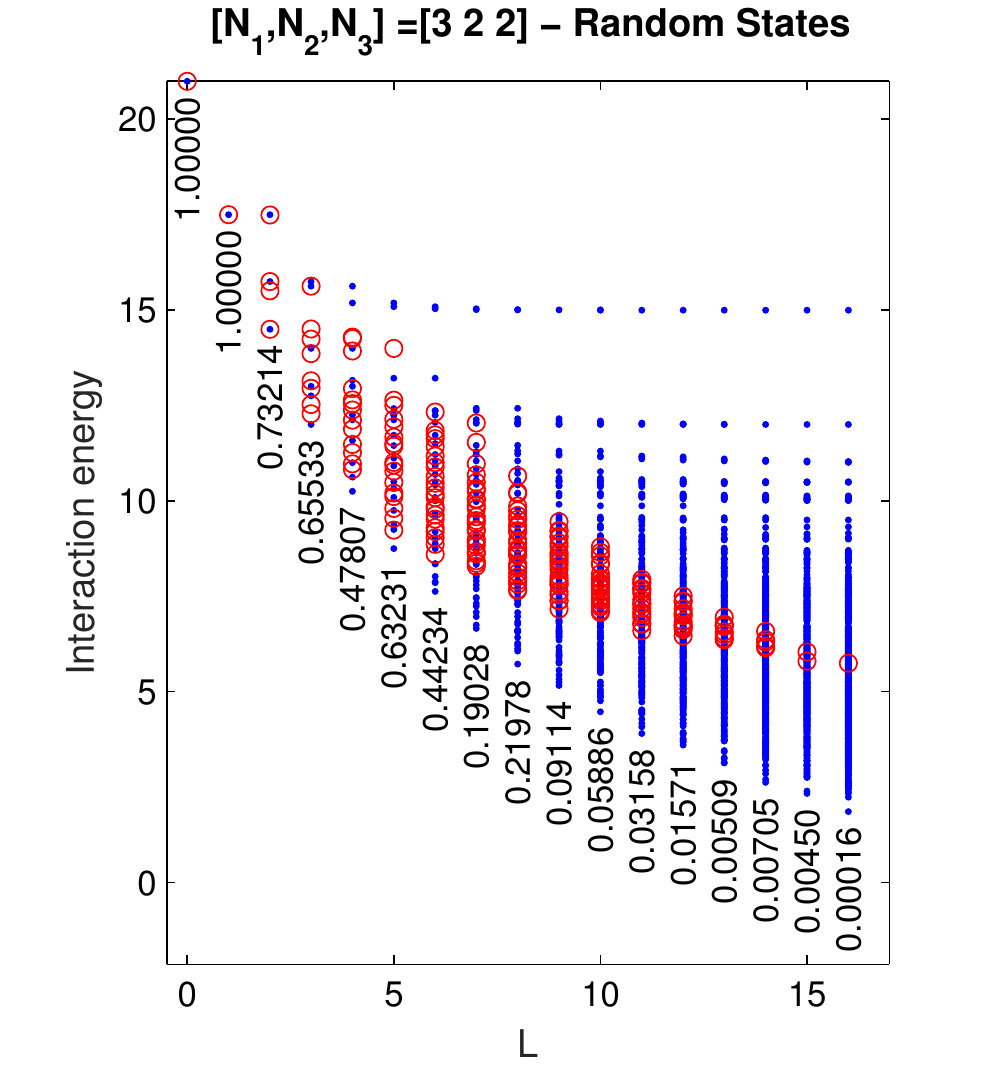}}
	\caption{The exact diagonalization (dots) of the homogeneous zero-range interaction potential within the Hilbert Space of translationally invariant states for $(N_1,N_2,N_3) = (3,3,2)$ together with the diagonalization (rings) within 
	(a) the subspace of simple states and 
	(b) a random subspace of equal dimension. 
	Numbers denote the squared absolute values of the ground state projection onto the subspaces.}
	\label{fig:SimpleStateDiagonalization}
\end{figure*}

We define the elementary differentiation polynomial of degree $R$ by 

\be{}
\hat d_R =  \N \sum_{\tau \in S_A} \prod_{\nu=1}^R \partial_{\tau(\nu)},
\ee
where $\N = \frac{1}{R!(A-R)!}$.
$\hat d_R$ is the sum over all unique products of $R$ out of $A$ first order partial derivatives.
The action of $\hat d_R$ on a simple polynomial is given by

\begin{multline}
\hat d_R  P(\bm n) =
\N \sum_{\tau \in S_A} \prod_{\mu=1}^R \partial_{\tau(\nu)}
\sum_{
\sigma \in S_{\oplus N_\alpha}}
(-1)^{|\sigma|}
\prod_{\mu=1}^A \partial_{\mu}^{n_{\sigma(\mu)}}
\mathcal J \\
=
\N \sum_{\tau \in S_A}
\sum_{
\sigma \in S_{\oplus N_\alpha}}
(-1)^{|\sigma|}
 \prod_{\nu=1}^R \partial_{\tau(\nu)}
\prod_{\mu=1}^A \partial_{\tau(\mu)}^{n_{\tau(\sigma(\mu))}}
\mathcal J \\
=
\N \sum_{\tau \in S_A}
\sum_{
\sigma \in S_{\oplus N_\alpha}}
(-1)^{|\sigma|}
 \prod_{\nu=1}^R \partial_{\tau(\nu)}^{n_{\tau(\sigma(\nu))}+1}
\prod_{\mu=R+1}^A \partial_{\tau(\mu)}^{n_{\tau(\sigma(\mu))}}
\mathcal J \\
=
\N \sum_{\tau \in S_A}
P(\bm n + \sum_{\nu=1}^R \bm e_{\tau(\nu)}),
\end{multline}
where $\bm e_\nu \in \mathbb N^A$ is a unit vector in the $\nu$'th direction. 
Lemma \ref{lemma:SymmetricTranslationalInvariance}  states that simple polynomials obey 
\be{}
\hat d_R P(\bm n) = 0.
\ee
This is an equivalent formulation of generalized translation invariance, as was used in \cite{meyer-lia16,lia-meyer17}.

\section{A basis for the simple states}

We now present our main result.

\begin{theorem}

The set $\B = \{P(\bm n)|  \text{ Eqs. }(\ref{eq:SSstepCritereon}) - (\ref{eq:SSangmomCritereon})\}$, where

\be{}
\label{eq:SSstepCritereon} n_{\alpha,i} < \sum_{\beta=1}^{\alpha} N_\beta,
\ee
\be{}
\label{eq:SSuniqueCritereon} n_{\alpha i} < n_{\alpha j} \Leftrightarrow i<j 
\ee
\be{}
\label{eq:SSangmomCritereon} \sum_{\mu=1}^A n_\mu =  \frac{A(A-1)}{2} - L
\ee

is a basis for the space of simple polynomials of degree $L$.
Consequently the corresponding set of simple states $\left \{ P(\bm n)  \exp\left (-\sum_{\mu=1}^A z_\mu \bar z_\mu/4 \right ) \right \}$ is a basis for the space of simple states with angular momentum $L$.
\end{theorem}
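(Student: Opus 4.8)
The plan is to establish the two defining properties of a basis separately: that $\B$ spans the space of degree-$L$ simple polynomials, and that its elements are linearly independent. First I would record a reduction. By the within-species antisymmetry noted after Eq.~(\ref{eq:SimplePolynomial}), $P(\bm n)$ vanishes whenever two occupation numbers of the same species coincide, and sorting the occupation numbers of each species into strictly increasing order changes $P(\bm n)$ only by a sign; hence every nonzero simple polynomial equals $\pm P(\bm n)$ for a unique $\bm n$ obeying (\ref{eq:SSuniqueCritereon}). Consequently the degree-$L$ simple polynomials are already spanned by the sorted vectors satisfying (\ref{eq:SSuniqueCritereon}) and (\ref{eq:SSangmomCritereon}), and $\B$ is exactly the further subset cut out by the ``box'' condition (\ref{eq:SSstepCritereon}), $n_{\alpha,i}<\sum_{\beta\le\alpha}N_\beta$. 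The entire content of the theorem is thus that (\ref{eq:SSstepCritereon}) discards nothing from the span while leaving a linearly independent set.

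For the spanning property I would use Lemma~\ref{lemma:SymmetricTranslationalInvariance}. Collapsing the sum over $S_A$ in the identity $\hat d_R P(\bm n)=0$ derived above (each $R$-element subset of particles arises from $R!(A-R)!$ permutations, cancelling the prefactor), the relation reads $\sum_S P\big(\bm n+\sum_{i\in S}\bm e_i\big)=0$, summed over $R$-element subsets $S\subset\{1,\dots,A\}$: a linear dependence among the simple polynomials obtained by raising $R$ of the occupation numbers by one. The strategy is a descent. Fixing a well-ordering on sorted vectors in which larger occupation values count as larger, I would show that any sorted $\bm n$ violating (\ref{eq:SSstepCritereon}) can be realized as the \emph{unique} maximal term of such a relation---choosing $R$ and the base vector so that every other term is either strictly smaller or vanishes through a repeated within-species value---and then solve for $P(\bm n)$ in terms of strictly smaller simple polynomials. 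Iterating the descent expresses every sorted simple polynomial, and hence every simple polynomial of degree $L$, as a combination of elements of $\B$.

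For linear independence I see two complementary routes. The direct route is triangularity: expanding $P(\bm n)$ through $\mathcal J=\det(z_\mu^{j-1})$ and Eq.~(\ref{eq:SimplePolynomial}), particle $\mu$ carries exponent $(\text{power in }\mathcal J)-n_{\sigma(\mu)}$, so $P(\bm n)$ is a species-wise symmetric polynomial with a distinguished leading monomial; one shows the map from a sorted $\bm n$ to this leading exponent pattern is injective, making $\{P(\bm n)\}_{\bm n\in\B}$ unitriangular and hence independent. The alternative route is a dimension count: compute the dimension of the degree-$L$ simple-polynomial space independently and match it to $|\B|$, so that spanning forces independence. This second route also yields the $q$-multinomial statement, since (\ref{eq:SSstepCritereon}) and (\ref{eq:SSuniqueCritereon}) identify $\B$ with tuples $(T_1,\dots,T_M)$ in which $T_\alpha$ is an $N_\alpha$-subset of $\{0,\dots,\sum_{\beta\le\alpha}N_\beta-1\}$ of prescribed total weight $\tfrac{A(A-1)}{2}-L$, whose weight generating function factors into Gaussian binomial coefficients.

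I expect the spanning step to be the main obstacle. The relations from Lemma~\ref{lemma:SymmetricTranslationalInvariance} raise occupation numbers before re-sorting within species, so the same sorted vector can be reached with competing signs, and the delicate point is to exhibit, for each violating configuration, a relation in which the target is the unique surviving maximal term while all companion terms are genuinely smaller in the chosen order. Controlling this bookkeeping---especially the cross-species coupling, since $\hat d_R$ acts over all $A$ particles and can trade occupation between species---is where the argument must be made airtight. By contrast the independence step reduces either to a single leading-term computation or to the combinatorial count, both of which are routine once the leading monomial of $P(\bm n)$ has been identified.
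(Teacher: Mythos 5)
Your overall skeleton matches the paper's: reduce to sorted occupation vectors, prove spanning by a descent based on the translation-invariance relations, and prove independence by triangularity. Your first independence route is essentially the paper's argument: expanding $P(\bm n)$ through $\mathcal J$, the distinguished term is the one coming from the identity permutation of $\mathcal J$, with exponents $\mu-1-n_\mu$, and it is nonzero precisely when Eq.~(\ref{eq:SSstepCritereon}) holds together with sortedness; this gives the unitriangular structure you describe. Your alternative independence route (matching $|\B|$ against an independently computed dimension) is circular, however: the dimension of the span of all simple polynomials is not known a priori --- the $q$-multinomial count is a \emph{corollary} of the basis theorem, not an available input.

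The genuine gap is in the spanning step, and it sits exactly where you flag ``the main obstacle'' without resolving it. Two concrete defects. First, your proposed well-ordering (``larger occupation values count as larger'') cannot work: for $(N_1,N_2)=(1,1)$ the vectors $(1;0)$ and $(0;1)$ carry the same multiset of values, yet the relation $P((1;0))+P((0;1))=0$ forces the descent to declare $(0;1)$ strictly smaller; any workable order must be species-aware, and the paper's is lexicographic from the last species in the sums $\sum_i n_{\alpha i}$. Second, a single relation $\hat d_R P(\bm n)=0$ generally does \emph{not} have the violating polynomial as its unique maximal term: for $(N_1,N_2)=(3,1)$ and target $(0,2,3;0)$, the $R=1$ relation based at $(0,1,3;0)$ contains the companion $(0,1,4;0)$, which neither vanishes nor is smaller --- it has identical species sums and a larger maximal value. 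What closes this gap in the paper is a strengthened identity (Lemma~\ref{lemma:SSidentity}), proved by induction on $R$ from Lemma~\ref{lemma:SymmetricTranslationalInvariance}:
\be{}
P(\bm n)=(-1)^R\,\hat d_R^{\sim\mu}\,P(\bm n-R\bm e_\mu),
\ee
where $\hat d_R^{\sim\mu}$ raises $R$ occupation numbers \emph{excluding} the $\mu$'th. Applying it with $R=n^*_\mu\geq\sum_{\beta\leq\alpha}N_\beta$ depletes the violating slot entirely; since only $\sum_{\beta\leq\alpha}N_\beta-1$ other slots exist in species $\beta\leq\alpha$, the pigeonhole principle forces at least one raised slot into a species $\beta>\alpha$, so \emph{every} term on the right is strictly smaller in the species-sum order and the descent terminates. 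Without this identity (or an equivalent device obtained by composing the single-step relations), your plan asserts precisely the theorem's nontrivial combinatorial core rather than proving it.
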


Pictorially, a simple polynomial in $\B$ corresponds to a figure where no $\circ$'s occupy spaces above the step function which increases by $N_\alpha$ for each step $\alpha-1\rightarrow \alpha$, see figure \ref{fig:BasisLimit}.
An example of two polynomials from the same space where one is not in $\B$ is given in figure \ref{fig:basisExample}.

\begin{figure}
\centering 
	\includegraphics[scale=1]{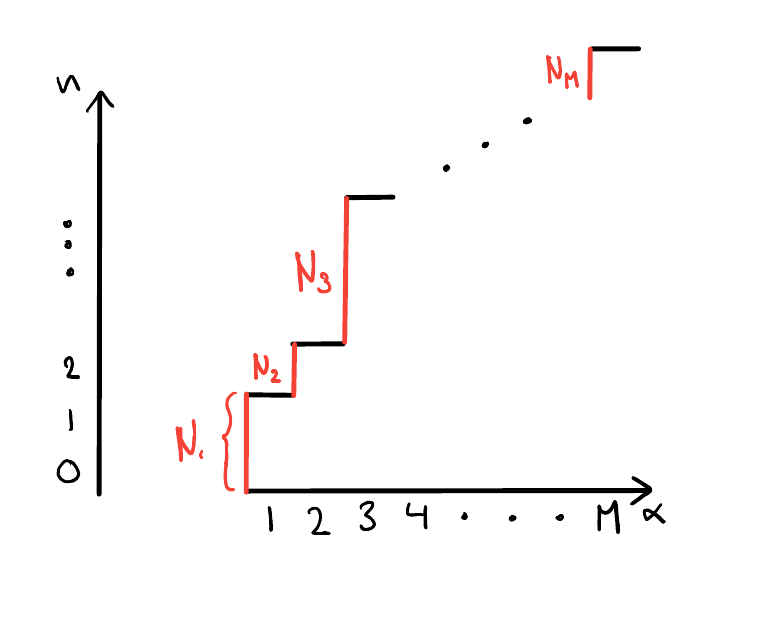}
	\caption{The step function that limits the configuration space of the basis polynomials when represented pictorially.}
	\label{fig:BasisLimit}
\end{figure}

\begin{figure}
\centering 
	\includegraphics[scale=0.8]{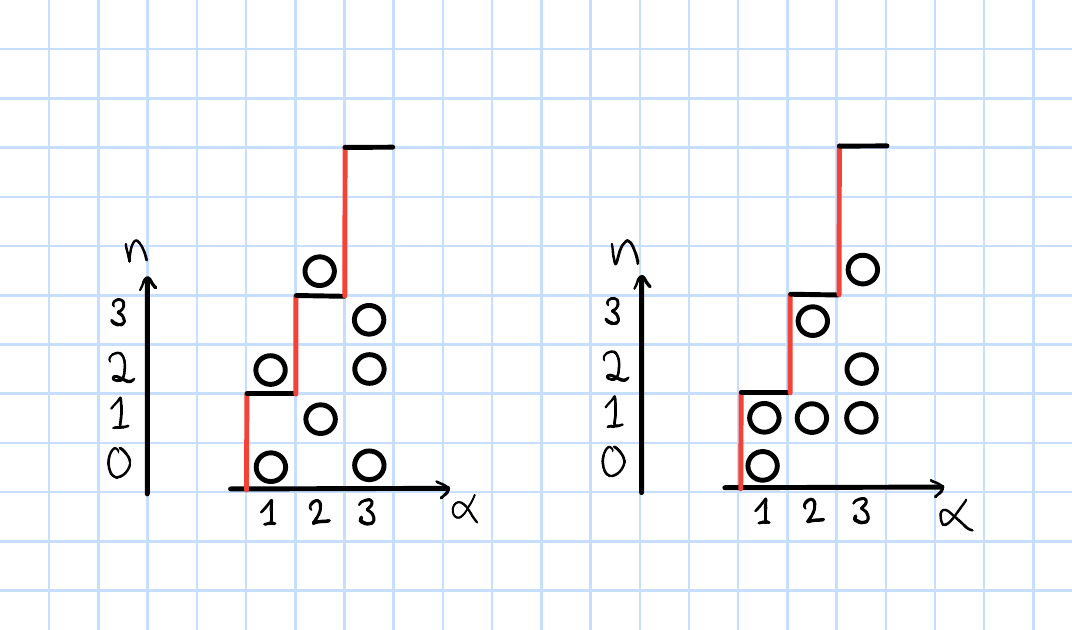}
	\caption{Example of two simple polynomials for the system $(N_1,N_2,N_3) = (2,2,3)$, both of degree $L=9$. 
	The left polynomial is not in $\B$ whereas the right one is. }
	\label{fig:basisExample}
\end{figure}

\begin{proof}

To prove that $\B$ is a basis we need to show that it spans the whole space of simple polynomials and that it is a linearly independent set.
We refer to Appendix \ref{sec:AppendixLemmas} for the Lemmas used in this section.

\subsubsection{Spanning of the space of simple polynomials}
Eq. (\ref{eq:SSuniqueCritereon}) picks a unique permutation within each species which fixes the otherwise ambiguous sign of the pictorial representation.
Eq. (\ref{eq:SSangmomCritereon}) ensures that the polynomials have the correct degree. 
The non-trivial result is that Eq. (\ref{eq:SSstepCritereon}) gives a basis under these conditions.

As shown in Lemma \ref{lemma:SSidentity} by induction, a consequence of Lemma \ref{lemma:SymmetricTranslationalInvariance} is that for all $R\leq n_\mu$

\be{}
P(\bm n)
=
(-1)^R
\hat d_R^{\sim \mu}
P(\bm n - R \bm e_\mu),
\ee
where

\be{}
\hat d_R^{\sim \mu}=
\N \sum_{\tau \in S_A} \prod_{\nu=1}^R \partial_{\tau(\nu)} (1-\delta_{\mu \tau(\nu)}).
\ee
The action of $\hat d_R^{\sim \mu}$ on a simple polynomial gives the sum of all simple polynomials where $R$ occupation numbers excluding the $\mu$'th are increased by $1$.
So Lemma \ref{lemma:SSidentity} means that we can write a simple polynomial $P(\bm n)$ as the sum of all simple polynomials where one, the $\mu$'th, occupation number of $\bm n$ has been reduced by $R$ units and $R$ other (not $ \mu$) have been increased by one.

We now introduce a partial ordering $\prec$ of the simple polynomials as follows:

\begin{multline}
P(\bm n') \prec P(\bm n)  \quad \textrm{ If for an } \alpha \\
\sum_{i} n'_{\alpha i} > \sum_{i} n_{\alpha i}
\quad \textrm{and} \quad \forall \beta > \alpha \left (\sum_{i} n'_{\beta i} = \sum_{i} n_{\beta i}   \right ).
\end{multline}
This means that a polynomial is ''smaller'' wrt. $\prec$ than another if the sum of its occupation numbers for the last species is bigger than the other. 
And if those are equal, the one with the largest occupation number sum for the next to last species is ''smaller'' and so on.
For a given total sum of occupation numbers in $\bm n$, there is a limit as to how ''small'' a state can get since $n_{\alpha i} \geq 0$.
Now consider a simple polynomial $P(\bm n^*)$ of degree $L$ which is not in $\B$ because it violates Eq. (\ref{eq:SSstepCritereon}).
This implies that it contains some exponent

\be{}
n^*_{\mu} = n^*_{\alpha, i} \geq \sum_{\beta \leq \alpha} N_\beta.
\label{eq:OffBasisOccNum}
\ee
We rewrite $P(\bm n^*)$ according to Lemma \ref{lemma:SSidentity}

\be{}
P(\bm n^*) = 
(-1)^{n^*_{\mu}}
\hat d_{n^*_{\mu}}^{\sim \mu}
P(\bm n^* - n^*_\mu \bm e_{\mu}).
\ee

The right hand side of this equation is a sum of polynomials where the $\mu$th occupation number is $n^*_\mu$ less than in $P(\bm n^*)$ and $n^*_\mu$ other occupation numbers are $1$ more than in $P(\bm n^*)$. 
The number of occupation numbers for species $\beta \leq \alpha$ which can be increased in this way is $\sum_{\beta \leq \alpha} N_\beta - 1$.
We see from Eq. (\ref{eq:OffBasisOccNum}) that this is less than $n^*_\mu$ and by the pigeon hole principle at least one occupation number must have increased for a species $\beta>\alpha$.
This means that we have rewritten $P(\bm n^*)$ in terms of polynomials that are ''smaller'' wrt. $\prec$. 
For any simple polynomial not in $\B$ because it violates Eq. (\ref{eq:SSstepCritereon}) we can perform this iteration and since there is a lower limit to how ''small'' a state can be, at some point we must have rewritten $P(\bm n^*)$ in terms of polynomials that obey Eq. (\ref{eq:SSstepCritereon}).

\subsubsection{Linear independence of the polynomials}
To prove the linear independence of the basis states we must return to the full form of a simple state given by $\bm n$.

\begin{multline}
P(\bm n)
=
\sum_{
\sigma \in S_{\oplus N_\alpha}}
\sum_{\tau \in S_A} (-1)^{|\tau|+|\sigma|}
\prod_{\mu=1}^A \partial_{\sigma(\mu)}^{n_{\mu}}
\prod_{\nu = 1}^A
z_{\tau(\nu)}^{\nu - 1} \\
=
\sum_{\tau \in S_A} (-1)^{|\tau|}
\sum_{
\sigma \in S_{\oplus N_\alpha}}
\prod_{\mu=1}^A \partial_{\sigma(\mu)}^{n_{\mu}}
z_{\sigma(\mu)}^{\tau(\mu) - 1}  \\
=
\sum_{\tau \in S_A}  t_{\bm n, \tau}.
\end{multline}
$t_{\bm n, \tau}$ is a symmetric polynomial within each species given by

\be{}
t_{\bm n,\tau} = 
(-1)^{|\tau|}
K_{\bm n, \tau}
\sum_{
\sigma \in S_{\oplus N_\alpha}}
\prod_{\mu=1}^A 
z_{\sigma(\mu)}^{\tau(\mu) - 1 - n_{\mu}} ,
\ee
where $K_{\bm n,\tau}$ is a constant obtained from differentiation, zero if $n_\mu > \tau(\mu)-1$ for any $\mu$.
We define an ordering of non-zero terms by

\be{}
t_{\bm n,\tau} < t_{\bm n',\tau'}
\ee 
if for the last particle species the $k$th least exponent of $t_{\bm n,\tau}$ is greater than the $k$th least exponent of $t_{\bm n',\tau'}$ and their $k-1$ least exponents are pairwise equal. 
If this is the case for all exponents for particles in the exponents in the last species, the terms are sorted according to the $M-1$ particle species and so on. 
Consider $t_{\bm n,\id}$ where $\id$ is the identity permutation

\be{}
t_{\bm n,\id}
=
K_{\bm n, \id}
\sum_{
\sigma \in S_{\oplus N_\alpha}}
\prod_{\mu=1}^A 
z_{\sigma(\mu)}^{\mu - 1 - n_{\mu}}.
\ee
This term is non-zero for all simple polynomials that obey Eq. (\ref{eq:SSstepCritereon}) while zero for simple polynomials that do not because for the latter at least \textit{one} $n_\mu$ is greater than $\mu-1$. 
For any other term $ t_{\bm n,\tau}$ in $P(\bm n)$

\be{}
t_{\bm n,\id} < t_{\bm n,\tau}.
\ee
Since every simple polynomial in $\B$ has a unique $t_{\bm n,\id}$,
 we can impose a complete ordering on the polynomials in $\B$ by 

\be{}
P(\bm n) <  P(\bm n') \quad \Leftrightarrow \quad t_{\bm n,\id} < t_{\bm n',\id}.
\ee
The smallest simple polynomial $P(\bm n_0)$ with respect to the ordering then has the property that $t_{\bm n_0,\id}$ does not occur in $P({\bm n})$ for any $P(\bm n) > P(\bm n_0)$. 
In general $t_{\bm n,\id}$ is not a term in $P(\bm n')$ for any $P(\bm n') > P(\bm n)$.
We numerate the simple states in $\B$ from $0$ to $T$ according to this relation.
Consider then a linear dependence relation with all $T$ simple states from the basis $\B$

\be{}
c_0 P(\bm n_0) + c_1 P(\bm n_1) + ... + c_T P(\bm n_T) = 0.
\ee
Since $t_{\bm n_0,\id}$ is in $P(\bm n_0)$, but not in $P(\bm n_i)$ for $i>0$, $c_0$ must be zero. 
By induction all coefficients must be $0$ and the states are linearly independent by definition.

\end{proof}

\section{On the number of states}
\label{sec:numberofSS}
We now address the problem of counting the simple polynomials in $\B$.
To do so we utilize a one-to-one correspondence between a simple polynomial and a certain class of multipartitions which we have called \textit{simple multipartitions} (SMPs). 
We refer to Appendix \ref{sec:AppendixSMPs} for the mathematical details of how to arrive at the result of this section.

%This number must be independent of the order of $N_1,...,N_M$, even though the basis depends on this order.  This translates to a combinatorial identity concerning the number of possible partitions of an integer.  We will also see that there is a closed form expression for the number of states in $\B$, which is explicitly independent of the order.

We use the $q$-analog of the factorial to present a closed form expression for the number of simple polynomials of degree $L$, which we denote $k_L(N_1,...,N_M)$.  
The $q$-analog of a positive integer is $[n]_q = \sum_{i=0}^{n-1} q^i$, and the $q$-factorial $[n]_q!$ is defined recursively by $[n]_q!=[n]_q [n-1]_q!$ with $[0]_q! = 1$.  
The numbers $k_L(N_1,...,N_M)$ then appear as the coefficients of the $q$-multinomial

\be{}
\sum_{L=0}^{L_{MAX}} k_L(N_1,...,N_M) q^L=\frac{[N_1 + ... + N_M]_q!}{[N_1]_q!\times...\times[N_M]_q!}
\label{eq:qmultinomial}
\ee
where

\be{}
L_{MAX} = \sum_{\alpha=1}^M \sum_{\beta=\alpha+1}^M N_\alpha N_\beta
\ee
is the largest possible degree of a simple polynomial.
We see that eq. (\ref{eq:qmultinomial}) reflects the fact that the number of states in $\B$ is independent of the order $N_1,...,N_M$, even though the actual states of $\B$ depend on said order.
It is instructive to see an example of how to use eq. (\ref{eq:qmultinomial}). 
Consider the system $(N_1,N_2,N_3) = (3,2,1)$.
The corresponding $q$-multinomial is given by 

\begin{multline}
\frac{[3+2+1]_q!}{[3]_q![2]_q![1]_q!}
=
\frac{[6]_q[5]_q[4]_q[3]_q!}{[3]_q![2]_q![1]_q!}
=
\frac{[6]_q[5]_q[4]_q}{[2]_q[1]_q[1]_q} \\
=
\frac{\begin{array}{l} (1+q+q^2+q^3+q^4+q^5) \\ \quad \times (1+q+q^2+q^3+q^4) \\ \qquad \times (1+q+q^2+q^3)\end{array}}{(1+q)(1)(1)} \\
= 1 + 2q + 4q^2 + 6q^3 + 8q^4 + 9q^5 + 9q^6 + 8q^7 + 6q^8 + 4q^9 + 2q^{10} + q^{11}.
\end{multline}
Which means, for instance, that the number of simple polynomials of degree $L=4$ is $8$.

\section{Summary}

We have found a basis for the space of $M$-component simple states generalizing one of the bases found for the two-component simple states \cite{lia-meyer17} along with a proof of its validity. 
Redundant states due to linear dependencies is a common issue with the CF formalism and our result gives a contribution to the understanding of these dependencies.
The more general problem of understanding the linear dependencies between all compact states is still open. 

The numbers of basis states for different values of angular momentum turn out to be given by coefficients of $q$-multinomials.
Linking these numbers to a certain class of multipartitions has lead to an interesting visual interpretation of the fact that the $q$-multinomials are independent of the order $N_1,...,N_M$
(see Appendix \ref{sec:AppendixSMPs} and figure \ref{fig:YoungDiagrams123}).

\section{Acknowledgement}

We thank our collaborator Marius L. Meyer for much help with the implementation of the various numerical methods used to discover and verify the correct basis generalization, and many a helpful discussion.
We also thank Susanne F. Viefers for many insightful comments on the written manuscript and Jørgen Vold Rennemo for pointing out the connection between SMPs and $q$-multinomials.
This research was financially supported by the Research Council of Norway.

\appendix

\section{Simple multipartitions (SMPs) and $q$-multinomials}
\label{sec:AppendixSMPs}

We define an SMP as a multipartition of an integer into $(M-1)$ partitions in rectangles (i.e. with limits on both the maximal, and the number of non-zero, elements). 
The dimensions of the rectangles are given by $M$ numbers $N_1,...,N_M$ such that the dimensions of rectangle $\alpha$ are given by 

\be{}
\left (\sum_{\beta\leq \alpha}N_\alpha \right) \times N_{\alpha+1}.
\ee
We can represent an SMP of an integer $L$ by a vector $\bm p \in \mathbb N^A$ 

\be{}
\bm p = [p_{1,1},p_{1,2},...,p_{M,N_M}]
\ee
where $\sum_{\alpha,i} p_{\alpha,i} = L$, $p_{\alpha,i}\leq p_{\alpha,j}$ if $i>j$ and $p_{\alpha,i} \leq \sum_{\beta=1}^{\alpha-1} N_\beta$.
The possible values of $L$ are within $\{0,...,\sum_{\beta<\alpha=1}^M N_\alpha N_\beta\}$.
An SMP corresponds to compactly coloring $L$ cells to make $M-1$ Young diagrams in rectangles of dimensions $N_1 \times N_2$, $(N_1+N_2) \times N_3$,..., $\left (\sum_{\beta=1}^{M-1} N_\beta\right ) \times N_M$ as shown in figure \ref{fig:RPCP}.

\begin{figure}
\centering
\includegraphics[scale=0.6]{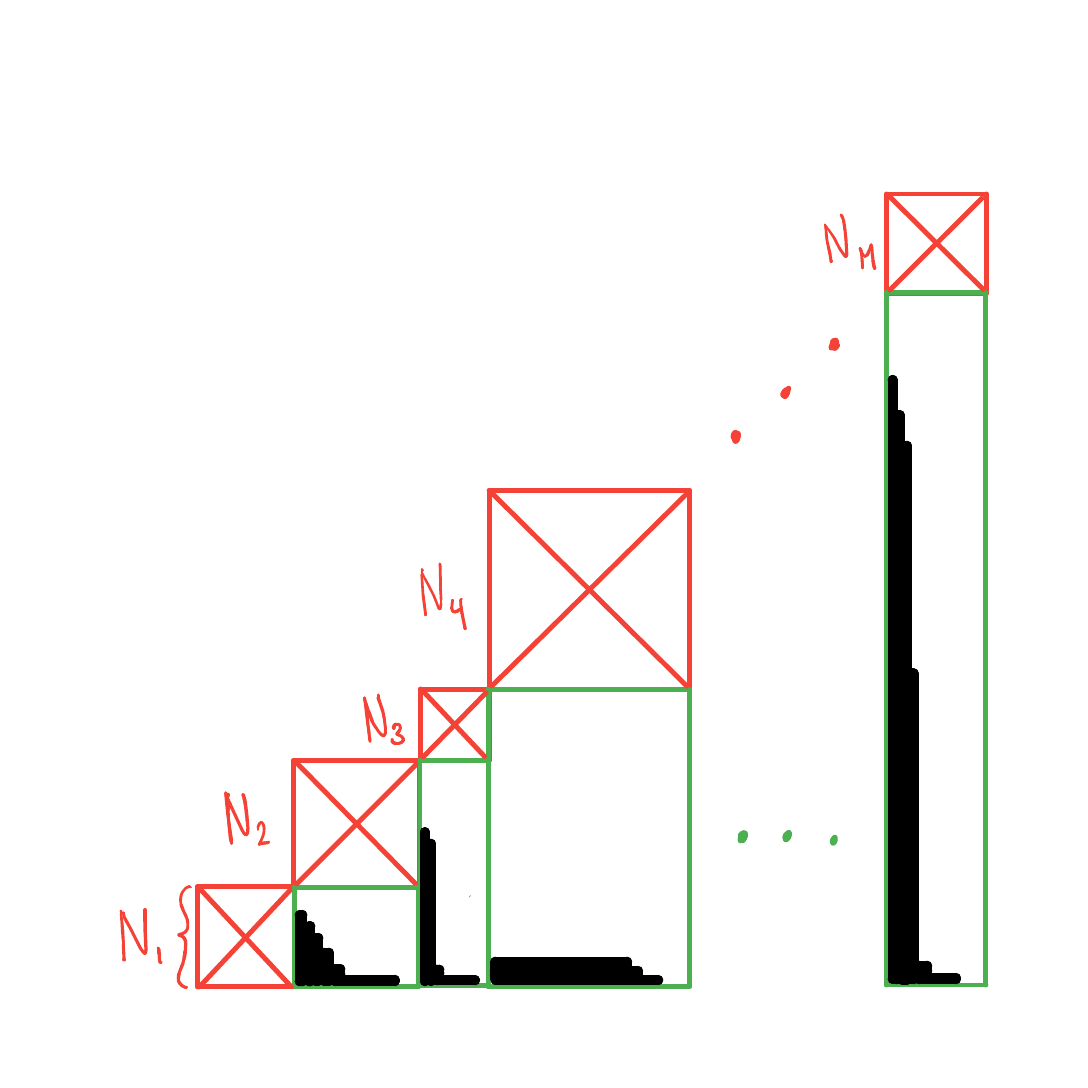}
\caption{The compact coloring of $L$  cells to make $M-1$ Young diagrams in rectangles of dimensions $N_1 \times N_2$, $(N_1+N_2) \times N_3$,..., $\left (\sum_{\beta=1}^{M-1} N_\beta\right ) \times N_M$.
Such a coloring corresponds uniquely to an SMP of $L$ where the height of column $i$ of the rectangle with width $N_\alpha$ is $p_{\alpha,i}$.}
\label{fig:RPCP}
\end{figure}

The one-to-one correspondence between a simple polynomial $P(\bm n) \in \B$ of degree $L$  and an SMP of $L$ is given by

\be{}
p_{\alpha,i} = \left (\sum_{\beta=1}^{\alpha-1} N_\beta \right ) + (i-1) - n_{\alpha,i}.
\ee
Note that since the number of basis states in $\B$ is independent of the order $N_1,...,N_M$, this one-to-one correspondence imply that the number of possible SMPs of an integer $L$ is independent of the order $N_1,...,N_M$. 
This is a generalization of the trivial fact that the number of possible compact colorings of $L$ cells in an $N_1\times N_2$ Young Diagram is the same as in an $N_2 \times N_1$ Young Diagram.
An example of this with $L=4$, $(N_1,N_2,N_3) = (1,2,3)$ is given in figure \ref{fig:YoungDiagrams123}.

\begin{figure*}
\centering
\includegraphics[width=\textwidth]{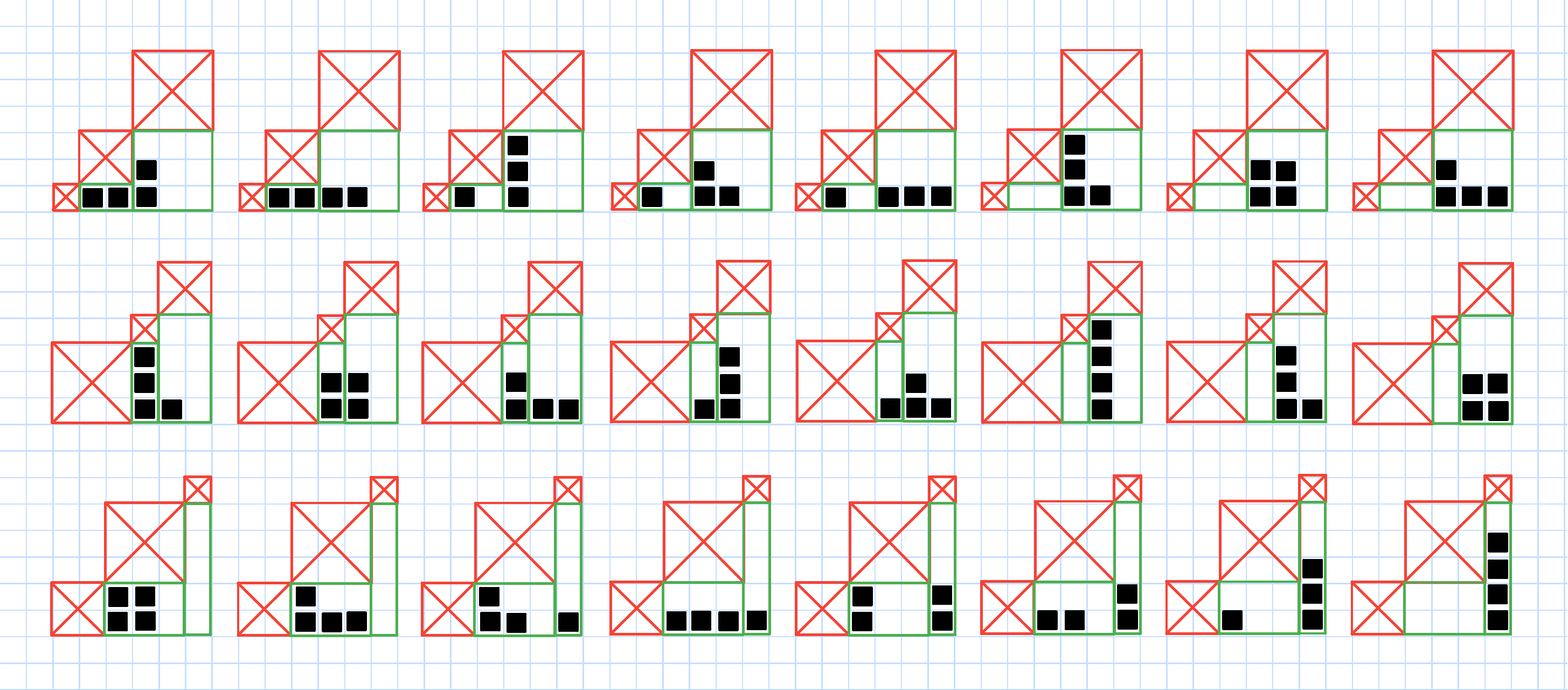}
\caption{Example of the compact coloring of $L$ cells in different combined Young diagrams for $(N_1,N_2,N_3)=(1,2,3)$ (first row), $(3,1,2)$ (second row) and $(2,3,1)$. 
The number of such colorings is independent of permutations of $(N_1,N_2,N_3)$.}
\label{fig:YoungDiagrams123}
\end{figure*}

The number of SMPs of $L$ is related to $q$-multinomials (see e.g. Ref. \cite{qmultinomials} for information on these).
A $q$-binomial is defined as follows

\be{}
\begin{bmatrix}
n\\
k
\end{bmatrix}_q
=
\frac{[n]_q!}{[n-k]_q![k]_q!}
\ee
where $[n]_q$ and $[n]_q!$ have been defined in section \ref{sec:numberofSS}.
The number $k_L(N_1,N_2)$ of compact colorings of $L$ cells in an $N_1\times N_2$ Young Diagram is related to $q$-binomials by (see e.g. Ref. \cite{MITopencourseware})

\be{}
\sum_{L=0}^{N_1 N_2} k_L(N_1,N_2)q^{L} = \begin{bmatrix} N_1  + N_2 \\ N_1 \end{bmatrix}_q
=
\frac{[N_1 + N_2]_q!}{[N_1]_q![N_2]_q!}.
\label{eq:qbinomYoungLink}
\ee
The product 

\be{}
\left (
\sum_{l=0}^{N_1 N_2} k_l(N_1,N_2)q^{l}
\right )
\left (
\sum_{l'=0}^{(N_1 +N_2) N_3} k_{l'}(N_1+N_2,N_3)q^{l'}
\right )
\ee
gives a polynomial in $q$ whose coefficient before $q^L$ is the number of ways to compactly color $L=l+l'$ cells in two Young Diagrams of dimensions $N_1\times N_2$ and $(N_1+N_2)\times N_3$.
We see from the right hand side of eq. (\ref{eq:qbinomYoungLink}) that this product equals 

\begin{multline}
\begin{bmatrix} N_1  + N_2 \\ N_1 \end{bmatrix}_q
\begin{bmatrix} N_1  + N_2 + N_3 \\ N_1 + N_2 \end{bmatrix}_q\\
=
\frac{[N_1 + N_2]_q!}{[N_1]_q![N_2]_q!}
\frac{[N_1 + N_2 + N_3]_q!}{[N_1+N_2]_q![N_3]_q!} 
=
\frac{[N_1 + N_2 + N_3]_q!}{[N_1]_q![N_2]_q![N_3]_q!} 
\end{multline}
where we identify the last expression as the $q$-multinomial. 
We have thus shown eq. (\ref{eq:qmultinomial}) for the case of $3$ species. 
The generalization to $M$ species is straightforward.

\section{Lemmata}
\label{sec:AppendixLemmas}

\begin{lemma}[Symmetric Translation Invariance]
\label{lemma:SymmetricTranslationalInvariance}
\be{}
\hat d_R P(\bm n) = 0
\ee
For $R\geq 1$.
\end{lemma}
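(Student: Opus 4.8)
The plan is to collapse the whole statement onto a single fact about the Jastrow factor, namely $\hat d_R \mathcal J = 0$, and then to prove that fact by a degree-versus-antisymmetry argument rather than by any explicit cancellation. First I would observe that every operator appearing in $\hat d_R$ and in $P(\bm n)$ is a constant-coefficient partial differential operator, so they all commute. Pulling $\hat d_R$ through the species-permutation sum that defines $P(\bm n)$ then gives
\[
\hat d_R P(\bm n) = \sum_{\sigma \in S_{\oplus N_\alpha}} (-1)^{|\sigma|}\left(\prod_{\mu=1}^A \partial_\mu^{n_{\sigma(\mu)}}\right)\hat d_R \mathcal J ,
\]
so it suffices to show $\hat d_R \mathcal J = 0$ for every $R\geq 1$, independently of $\bm n$.

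Second, I would identify $\hat d_R$ explicitly. The normalization $\N = 1/(R!(A-R)!)$ exactly compensates the multiplicity $R!(A-R)!$ with which each $R$-element index set is produced by the sum over $S_A$, so $\hat d_R$ equals the $R$th elementary symmetric polynomial $e_R(\partial_1,\dots,\partial_A)=\sum_{i_1<\cdots<i_R}\partial_{i_1}\cdots\partial_{i_R}$ in the single-particle derivatives. This matches the paper's description of $\hat d_R$ as ``the sum over all unique products of $R$ out of $A$ first order partial derivatives.'' The point of this reformulation is that $\hat d_R$ is invariant under every permutation $\pi\in S_A$ of the particle labels.

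The crux is to combine this symmetry with a degree count. The Jastrow factor is the Vandermonde determinant, hence totally antisymmetric, $\pi\cdot\mathcal J = (-1)^{|\pi|}\mathcal J$. Because $\hat d_R$ commutes with the $S_A$-action, $\pi\cdot(\hat d_R \mathcal J) = (-1)^{|\pi|}\hat d_R \mathcal J$, so $\hat d_R \mathcal J$ is again totally antisymmetric. On the other hand it is homogeneous of degree $\binom{A}{2}-R$, which is strictly below $\deg\mathcal J = \binom{A}{2}$ whenever $R\geq 1$. Now every totally antisymmetric polynomial in $A$ variables vanishes on each diagonal $z_i = z_j$ and is therefore divisible by $\mathcal J$; a nonzero such polynomial must have degree at least $\binom{A}{2}$. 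The degree deficit then forces $\hat d_R \mathcal J = 0$, and the displayed reduction finishes the lemma. For $R>A$ the operator $\hat d_R$ is the empty sum and the claim is trivial.

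Finally, I expect the only real temptation for error is to bypass this structural argument and instead attempt direct cancellation in the form $\hat d_R P(\bm n)=\N\sum_{\tau\in S_A} P\bigl(\bm n+\sum_{\nu=1}^R \bm e_{\tau(\nu)}\bigr)$ established above, pairing off the shifted simple polynomials by hand. That route is combinatorially awkward, since the cancellations are not between individual $\tau$ but among overlapping index sets and their induced signs. The degree argument sidesteps this entirely, so the genuine obstacle is simply recognizing that total antisymmetry together with a degree below $\binom{A}{2}$ already forces vanishing.
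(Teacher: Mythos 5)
Your proof is correct, but after the common first step it takes a genuinely different route from the paper's. Both you and the paper reduce the lemma, via commutativity of constant-coefficient operators, to the single identity $\hat d_R \mathcal{J} = 0$. The paper then proves that identity by explicit cancellation: writing $\hat d_R \mathcal{J}$ as a double sum over $\tau,\sigma \in S_A$, it shows each term either vanishes outright (differentiation of $z_{\sigma(1)}^0$) or cancels against a partner term obtained by composing $\sigma$ with a transposition and adjusting $\tau$ accordingly, the two terms having equal value and opposite parity. Your argument replaces this index bookkeeping with a structural one: the normalization $\mathcal N_R$ identifies $\hat d_R$ with the elementary symmetric polynomial $e_R(\partial_1,\dots,\partial_A)$, which commutes with the $S_A$-action on polynomials, so $\hat d_R\mathcal{J}$ inherits total antisymmetry from $\mathcal{J}$; since any nonzero totally antisymmetric polynomial is divisible by the Vandermonde determinant and hence has degree at least $\binom{A}{2}$, while $\hat d_R \mathcal{J}$ is homogeneous of degree $\binom{A}{2}-R < \binom{A}{2}$ for $R\geq 1$, it must vanish. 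What each approach buys: the paper's pairing is fully self-contained and uses nothing beyond the definitions, at the cost of delicate case analysis ($\nu'=1$ versus $\nu'>1$, locating $\mu_1$ and $\mu_2$); yours is shorter and less error-prone, makes the role of the normalization $\mathcal N_R$ transparent, and actually proves a stronger statement --- any symmetric constant-coefficient differential operator with zero constant term annihilates $\mathcal{J}$, of which the lemma is the special case $p=e_R$. The one point to state carefully in your write-up is the standard fact you invoke (antisymmetry implies divisibility by $\prod_{\mu<\nu}(z_\mu-z_\nu)$, using that these factors are pairwise coprime irreducibles), since it is the only external ingredient; with that spelled out, your proof is complete.
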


\begin{proof}

$\hat d_R$ consists only of differential operators and commutes with the differential operators in $P(\bm n)$.
We therefore only need to show that $\hat d_R \mathcal J = 0$.

\begin{multline}\label{lemma1dRJ}
\hat d_R \mathcal J
=
\N \sum_{\tau \in S_A} \prod_{\mu=1}^R \partial_{\tau(\mu)}
\sum_{\sigma \in S_A} (-1)^{|\sigma|}
\prod_{\nu = 1}^A
z_{\sigma(\nu)}^{\nu - 1} \\
=
\N \sum_{\tau \in S_A} 
\sum_{\sigma \in S_A} (-1)^{|\sigma|}
\prod_{\mu=1}^R \partial_{\tau(\mu)}
\prod_{\nu = 1}^A
z_{\sigma(\nu)}^{\nu - 1} \\.
\end{multline}

Consider a term $t_1$ in this sum given by $\tau=\tau_1$ and $\sigma=\sigma_1$ and consider the lowest value $\nu'$ for which $\sigma_1(\nu')=\tau_1(\mu_1)$ for some $\mu_1 \in \{1,2,...,R\}$.  If $\nu'=1$, then $t_1$ is zero due to differentiation of the constant $z_{\sigma_1(1)}^0$.  Else, if $\nu'>1$, we may consider the factor $z_{\sigma_1(\nu'-1)}^{\nu'-2}$ and write   
\be{}\label{lemma1t1}
t_1=(-1)^{|\sigma_1|}z_{\sigma_1(\nu'-1)}^{\nu'-2}\partial_{\tau_1(\mu_1)}z_{\sigma_1(\nu')}^{\nu'-1}\times\dots
\ee
where we have written out only the permutation sign and the variables and differentiation operators of type $\sigma_1(\nu')$ and $\sigma_1(\nu'-1)$.  We can now pick another term, $t_2$, given by permutations $\sigma_2$ and $\tau_2$.  The permutation $\sigma_2$ is the same as $\sigma_1$ except that 
$\sigma_2(\nu')=\sigma_1(\nu'-1)$ and $\sigma_2(\nu'-1)=\sigma_1(\nu')$.  We find the $\mu_2>R$ such that $\tau_1(\mu_2)=\sigma(\nu'-1)$ and define $\tau_2$ to be equal to $\tau_1$ except that $\tau_2(\mu_1)=\tau_1(\mu_2)$ and $\tau_2(\mu_2)=\tau_1(\mu_1)$.  This gives 
\be{}
t_2=(-1)^{|\sigma_2|}z_{\sigma_2(\nu'-1)}^{\nu'-2}\partial_{\tau_2(\mu_1)}z_{\sigma_2(\nu')}^{\nu'-1}\times\dots,
\ee
where the hidden part is the same as in Eq. (\ref{lemma1t1}).  We also have that 

\be{}
z_{\sigma_1(\nu'-1)}^{\nu'-2}\partial_{\tau_1(\mu_1)}z_{\sigma_1(\nu')}^{\nu'-1} & =z_{\sigma_2(\nu'-1)}^{\nu'-2}\partial_{\tau_2(\mu_1)}z_{\sigma_2(\nu')}^{\nu'-1}\\
& =(\nu'-1)z_{\sigma_1(\nu'-1)}^{\nu'-2}z_{\sigma_1(\nu')}^{\nu'-2}
\ee
The permutations $\sigma_1$ and $\sigma_2$ have different parities, and we can therefore conclude that $t_1+t_2=0$ and that all terms of $\hat d_R \mathcal J$ are cancelled in this way.

\end{proof}

\begin{lemma}[Simple State Identity]
\label{lemma:SSidentity}
\be{}
P(\bm n)
=
(-1)^R
\hat d_R^{\sim \mu}
P(\bm n - R \bm e_\mu)
\ee
where
\be{}
\hat d_R^{\sim \mu}=
\N \sum_{\tau \in S_A} \prod_{\nu=1}^R \partial_{\tau(\nu)} (1-\delta_{\mu \tau(\nu)}).
\ee
\end{lemma}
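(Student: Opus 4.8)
The plan is to deduce the identity from Lemma~\ref{lemma:SymmetricTranslationalInvariance} by induction on $R$, exploiting the combinatorial decomposition of $\hat d_R$ according to whether the derivative $\partial_\mu$ is present. The natural starting point is the subset-sum description of the operators, already recorded in Section~\ref{sec:SimpleStates}: writing sums over first-order derivatives as sums over $R$-element subsets $S\subseteq\{1,\dots,A\}$, one has $\hat d_R P(\bm m)=\sum_{|S|=R}P(\bm m+\sum_{i\in S}\bm e_i)$, and correspondingly $\hat d_R^{\sim\mu}P(\bm m)=\sum_{|S|=R,\ \mu\notin S}P(\bm m+\sum_{i\in S}\bm e_i)$. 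I would establish the latter by the same commute-and-relabel manipulation used for $\hat d_R$ in Section~\ref{sec:SimpleStates}, simply carrying along the factors $(1-\delta_{\mu\tau(\nu)})$, which restrict the admissible subsets to those avoiding $\mu$.

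The lemma-producing step is then to split the $R$-subsets appearing in Lemma~\ref{lemma:SymmetricTranslationalInvariance} into those that omit $\mu$ and those that contain it. Since $0=\hat d_R P(\bm m)=\sum_{|S|=R}P(\bm m+\sum_{i\in S}\bm e_i)$, and since every subset containing $\mu$ is uniquely of the form $\{\mu\}\cup S'$ with $|S'|=R-1$ and $\mu\notin S'$, this decomposition reads
\begin{equation}
\hat d_R^{\sim\mu}P(\bm m)=-\hat d_{R-1}^{\sim\mu}P(\bm m+\bm e_\mu),
\end{equation}
valid for any admissible $\bm m$ and any $R\ge 1$. This single recursion is the heart of the argument. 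I would stress that one must use the subset-sum action of the operators here, and \emph{not} the operator factorisation $\hat d_R=\hat d_R^{\sim\mu}+\partial_\mu\hat d_{R-1}^{\sim\mu}$ combined with a putative identity $\partial_\mu P(\bm m)=P(\bm m+\bm e_\mu)$: the latter is false in general, because the extra derivative produced by $\partial_\mu$ always lands on the fixed slot $\mu$, whereas in $P(\bm m+\bm e_\mu)$ it is antisymmetrised over all slots of the species of $\mu$. Applying $\hat d_{R-1}^{\sim\mu}$ directly to the genuine simple polynomial $P(\bm m+\bm e_\mu)$ sidesteps this trap.

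With the recursion in hand the induction is routine. The base case $R=0$ is the tautology $P(\bm n)=\hat d_0^{\sim\mu}P(\bm n)$, since $\hat d_0^{\sim\mu}=\id$. For the inductive step I would apply the recursion with $\bm m=\bm n-(R+1)\bm e_\mu$ at order $R+1$, obtaining $\hat d_{R+1}^{\sim\mu}P(\bm n-(R+1)\bm e_\mu)=-\hat d_R^{\sim\mu}P(\bm n-R\bm e_\mu)$; multiplying by $(-1)^{R+1}$ and invoking the induction hypothesis $P(\bm n)=(-1)^R\hat d_R^{\sim\mu}P(\bm n-R\bm e_\mu)$ closes the step.

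I expect the main obstacle to be conceptual rather than computational, namely recognising and avoiding the false simplification $\partial_\mu P(\bm m)=P(\bm m+\bm e_\mu)$ noted above; once the recursion is phrased at the level of the subset-sum action this dissolves. The only remaining care is bookkeeping of the non-negativity constraint $R\le n_\mu$: it is precisely this condition that guarantees that every occupation vector entering the induction, in particular $\bm n-(R+1)\bm e_\mu$, has non-negative entries and hence defines a legitimate simple polynomial, which is why the hypothesis $R\le n_\mu$ appears in the statement.
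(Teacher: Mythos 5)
Your proof skeleton --- induction on $R$ driven by Lemma~\ref{lemma:SymmetricTranslationalInvariance}, with the $R$-subsets split according to whether they contain $\mu$ --- is exactly the paper's; the paper packages the split as $\hat d_R=\hat d_R^{\sim\mu}+\hat d_R^{\mu}$, you package it as a subset decomposition, and the recursion and induction step are the same. But there is a genuine gap, and it sits precisely at the step you declare safe: the identity $\hat d_R^{\sim\mu}P(\bm m)=\sum_{|S|=R,\,\mu\notin S}P(\bm m+\sum_{i\in S}\bm e_i)$ is \emph{not} obtainable ``by the same commute-and-relabel manipulation used for $\hat d_R$'', and is in fact false in general. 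The relabelling works for $\hat d_R$ because, for each within-species permutation $\sigma\in S_{\oplus N_\alpha}$, the substitution $S\to\sigma(S')$ maps the family of \emph{all} $R$-subsets bijectively onto itself; with the restriction $\mu\notin S$ the same substitution produces the constraint $\sigma^{-1}(\mu)\notin S'$, which differs from $\mu\notin S'$ whenever $\sigma$ moves $\mu$ within its species. In other words, $\hat d_R^{\sim\mu}$ suffers from exactly the disease you correctly diagnose for $\partial_\mu P(\bm m)=P(\bm m+\bm e_\mu)$: the added derivatives land on fixed slots, and excluding the slot $\mu$ destroys the symmetry that rescues the unrestricted sum. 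Concretely, take $(N_1,N_2)=(2,1)$, $\mathcal J=(z_1-z_2)(z_1-z_3)(z_2-z_3)$, $\bm m=(1,0,0)$, $\mu=1$, $R=1$, so that $\hat d_1^{\sim 1}=\partial_2+\partial_3$. Then
\begin{equation}
\hat d_1^{\sim 1}P\bigl((1,0,0)\bigr)=2z_1-4z_2+2z_3,
\end{equation}
whereas
\begin{equation}
P\bigl((1,1,0)\bigr)+P\bigl((1,0,1)\bigr)=-2z_1-2z_2+4z_3 ,
\end{equation}
and indeed $P\bigl((2,0,0)\bigr)=2z_1+2z_2-4z_3\neq -\hat d_1^{\sim 1}P\bigl((1,0,0)\bigr)$, so the Lemma read literally as an operator identity fails here as well; no blind proof of the literal statement can be fully correct.

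To be clear about where this leaves you relative to the paper: the paper's own proof makes the identical unproven identification --- it asserts that $\hat d_R^{\mu}$ and $\hat d_R^{\sim\mu}$ act as occupation-number shifts, and its base case $\hat d_1^{\mu}P(\bm n-\bm e_\mu)=P(\bm n)$ is, since $\hat d_1^{\mu}=\partial_\mu$, literally the identity you reject --- so you have reproduced the published argument, gap included, while ironically flagging the very step on which its base case rests. The content that is both true and actually needed in the spanning part of the Theorem is the subset-sum statement: define $Q_R^{\sim\mu}(\bm m)=\sum_{|S|=R,\,\mu\notin S}P(\bm m+\sum_{i\in S}\bm e_i)$ as a sum of simple polynomials, \emph{not} as $\hat d_R^{\sim\mu}$ applied to $P(\bm m)$. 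The summed form of Lemma~\ref{lemma:SymmetricTranslationalInvariance}, $\sum_{|S|=R}P(\bm m+\sum_{i\in S}\bm e_i)=0$, is legitimately available (there the relabelling argument does apply, and one can check it on the example above), and splitting off the subsets containing $\mu$ gives your recursion $Q_R^{\sim\mu}(\bm m)=-Q_{R-1}^{\sim\mu}(\bm m+\bm e_\mu)$; your induction then yields $P(\bm n)=(-1)^R Q_R^{\sim\mu}(\bm n-R\bm e_\mu)$, which is the statement the Theorem uses. So your recursion-and-induction core is sound; the missing idea --- missing in the paper too --- is that the Lemma must be formulated and proved at the level of these sums of simple polynomials, because the differential operator $\hat d_R^{\sim\mu}$ does not implement them.
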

In words, the lemma states that we can write a simple polynomial $P (\bm n)$ as the sum
of all simple polynomials where one, the $\mu$'th, occupation
number of $\bm n$ has been reduced by $R$ units and $R$ other
(not $\mu$) have been increased by one.

\begin{proof}
The proof is based on induction. 
We define the operator 

\be{}\hat d_R^{\mu} = 
\N \sum_{\tau \in S_A} \prod_{\nu=1}^R \partial_{\tau(\nu)} \delta_{\mu \tau(\nu)}
\ee 
which when acting on $P(\bm n)$ gives the sum of all simple polynomials where $R$ occupation numbers, always including $\mu$, have been increased by $1$.
By definition

\be{}
\hat d_R = \hat d_R^{\sim \mu} + \hat d_R^\mu.
\ee

Consider the polynomial $P(\bm n - 1 \bm e_\mu)$.
Lemma \ref{lemma:SymmetricTranslationalInvariance} proves the base case, since

\begin{multline}
0 
=\hat d_1 P(\bm n - 1 \bm e_\mu) \\
=
\hat d_1^\mu  P(\bm n - 1 \bm e_\mu)  + \hat d_1^{\sim \mu} P(\bm n - 1 \bm e_\mu) 
\\
 = P(\bm n) +\hat d_1^{\sim \mu} P(\bm n - 1 \bm e_\mu)
\end{multline}

\be{}
P(\bm n) = (-1)^1 \hat d_1^{\sim \mu} P(\bm n - 1 \bm e_\mu).
\ee

Now, we assume the Lemma holds for $R$, i.e.

\be{}
P(\bm n) = (-1)^R \hat d_R^{\sim \mu} P(\bm n - R \bm e_\mu).
\ee
Consider the polynomial $P(\bm n - (R+1) \bm e_\mu)$. 
Lemma \ref{lemma:SymmetricTranslationalInvariance} gives

\begin{multline}
0
=
\hat d_{R+1} P(\bm n - (R+1) \bm e_\mu )  \\
=
\hat d_{R+1}^{\mu} P(\bm n - (R+1) \bm e_\mu )  + \hat d_{R+1}^{\sim \mu} P(\bm n - (R+1) \bm e_\mu )\\
=
\hat d_{R}^{\sim \mu} P(\bm n - R \bm e_\mu )  + \hat d_{R+1}^{\sim \mu} P(\bm n - (R+1) \bm e_\mu ) 
\end{multline}
since the sum of simple polynomials where $R+1$ occupation in $\bm n$, always including $\mu$, have been increased by $1$ is the same as the sum of simple polynomials where $R$ occupation numbers in $(\bm n + \bm e_\mu)$, not including $\mu$, have been increased by $1$. 
By the induction hypothesis, this is

\be{}
(-1)^R P(\bm n) + \hat d_{R+1}^{\sim \mu} P(\bm n - (R+1) \bm e_\mu ) = 0
\ee

\be{}
 P(\bm n)  = (-1)^{R+1} \hat d_{R+1}^{\sim \mu} P(\bm n - (R+1) \bm e_\mu ).
\ee

So by the induction principle the Lemma holds for all $R\geq 1$.

\end{proof}

%%%%%%%%%%%%%%%%
%%% BIBLIOGRAPHY %%%%
%%%%%%%%%%%%%%%%

\end{document}